\newcommand{\version}{January 19, 2013}
\theoremstyle{plain}
\newtheorem{thm}{THEOREM}[section]
\newtheorem{lm}[thm]{LEMMA}
\theoremstyle{definition}
\theoremstyle{definition}
\newtheorem{remark}[thm]{Remark}
\newcommand{\upchi}{\raise1pt\hbox{$\chi$}}
\newcommand{\C}{{\mathord{\mathbb C}}}
\newcommand{\cH}{{\mathcal{H} }}
\newcommand{\tr}{{\rm Tr}}
\renewcommand{\|}{{\Vert}}
\numberwithin{equation}{section}
\begin{document}


\def\tr{{\rm Tr}}

\title{On an Extension Problem for  Density Matrices}
\author{\vspace{5pt} Eric A. Carlen$^1$, Joel L. Lebowitz$^{2}$  and
Elliott H. Lieb$^{3}$ \\
\vspace{5pt}\small{$1.$ Department of Mathematics, Hill Center,}\\[-6pt]
\small{Rutgers University,
110 Frelinghuysen Road
Piscataway NJ 08854-8019 USA}\\
\vspace{5pt}\small{$2.$ Departments of Mathematics and Physics,
 Hill Center,}\\[-6pt]
\small{Rutgers University,
110 Frelinghuysen Road
Piscataway NJ 08854-8019 USA}\\
\vspace{5pt}\small{$3.$ Departments of Mathematics and
Physics, Jadwin
Hall,} \\[-6pt]
\small{Princeton University, Washington Road, Princeton, NJ
  08544-0001}\\
 }
\date{\version}
\maketitle 
\footnotetext                                                                         
[1]{Work partially
supported by U.S. National Science Foundation
grant DMS 0901632  }  

\footnotetext
[2]{Work partially
supported by U.S. National Science Foundation
grant DMR  113501  and AFOSR FA9550-10-1-0131}

\footnotetext
[3]{Work partially
supported by U.S. National Science Foundation
grant PHY 0965859  and a grant from the Simons Foundation (\# 230207 to
Elliott Lieb).  \\
\copyright\, 2012 by the authors. This paper may be
reproduced, in its
entirety, for non-commercial purposes.}

\begin{abstract}
We investigate the problem of the existence of a density matrix
$\rho_{123} $ 
 on  a Hilbert space $\cH_{1}\otimes\cH_2\otimes
\cH_3$ with given partial traces 
$\rho_{12} =\tr_{3} \, \rho_{123}$ and $\rho_{23}=\tr_{1}\, \rho_{123}$.
While we do not solve this problem completely
we offer partial results in the form of some necessary and some sufficient
conditions on $\rho_{12}$ and   $\rho_{23}$. The quantum case differs
markedly from the classical (commutative) case, where the obvious necessary
compatibility condition suffices, namely, $\tr_1\, \rho_{12} = \tr_3
\rho_{23}$.

\end{abstract}

\medskip
\leftline{\footnotesize{\qquad Mathematics subject
classification numbers: 81V99, 82B10, 94A17}}
\leftline{\footnotesize{\qquad Key Words: Density matrix, Entropy, Partial
trace }}

\section{Introduction} \label{intro}

The problem considered here is closely related to the {\em quantum marginal problem}, on which there is an extensive literature. 
However, since the problem we consider involve {\em overlapping marginals}, results in the literature shed little light on it.
We therefore introduce the problem in terms that we find natural, and postpone  the discussion of the relation between our results and
results on the quantum marginal problem until later in the introduction. 

Let $\cH_1, \, \cH_2, \, \cH_3 \, $ be three finite dimensional Hilbert
spaces. Let $\rho_{12} $ be a density matrix on $\cH_{12} = \cH_1\otimes
\cH_2$ and, similarly, let $\rho_{23} $ be a density matrix on $\cH_{23}
= \cH_1\otimes
\cH_2$.  The question we ask, and which we can only partially resolve, is:

{\em Assuming that there is consistency of the partial traces, namely
$\tr_1\, \rho_{12} = \rho_2 = \tr_3\, \rho_{23}$, what are necessary and
sufficient conditions for the existence of a density matrix $\rho_{123}$
on $\cH_{123} = \cH_1\otimes \cH_2\otimes \cH_3$ such that 
$\tr_1\,\rho_{123} = \rho_{23} $ and $\tr_3 \, \rho_{123} = \rho_{12} $?}

\medskip
This is an obvious question to ask in several contexts, e.g., \cite{kls}.
We note the fact that in classical statistical mechanics the
answer is that an extension always exists, and there is a simple formula
for it:  Given probability densities\footnote[1]{If the sample spaces over which $x$, $y$ and $z$  range are finite sets, these densities
may be identified with diagonal density matrices in an natural way, embedding the
discrete classical probability space problem into the quantum mechanical problem} $\rho_{12}(x,y)$ and 
$\rho_{23}(y,z)$ with
$$\int \rho_{12}(x,y){\rm d}\mu_1(x) = \int \rho_{23}(y,z){\rm d}\mu_3(z) =:
\rho_2(y)\ ,$$ we may define
\begin{equation}\label{constr}
 \rho_{123}(x,y,z) = \frac{\rho_{12}(x,y)\rho_{23}(y,z)}{\rho_2(y)}\ .
\end{equation}

This extension is not  unique, in general, but among all extensions, this
one has the maximum entropy, namely $S_{12} + S_{23} - S_2$. The maximality
is a consequence of the the Strong Subadditivity of Entropy (SSA) which says
that
for any extension
$$S_{12} + S_{23} \geq S_{123}+  S_2\ .$$
The classical SSA inequality is relatively easy to prove (as opposed to its quantum version); see \cite{l}.
The principle behind the construction is {\em conditioning}: Note that   for random variables $Y$ and $Z$
with joint density  $\rho_{23}(y,z)$, ${\rho_{23}(Y,z)}/{\rho_2(Y)}$ is the conditional density for $Z$ given $Y$.

This construction of extensions by conditioning can be generalized to arbitrarily many factors. 
Given $\rho_{j,j+1}$, $j=1,\dots,N$, such that 
$$\int \rho_{j-1,j}(x_{j-1},x_j){\rm d}\mu_{j-1}(x_{j-1}) = \int
\rho_{j,j+1}(x_j,x_{j+1}){\rm d}\mu_{j+1}(x_{j+1}) =: \rho_{j}(x_j)\ ,$$
define
$$\rho_{1,\dots,N}(x_1,\dots,x_N) =
\rho_{1,2}(x_1,x_2)\prod_{j=2}^{N-1}\frac{\rho_{j,j+1}(x_j,x_{j+1})}{
\rho_j(x_j) }\ .$$
Again this extension is the maximum entropy extension by an interated
application of SSA.

However, the construction (\ref{constr}), being based on conditional probabilities,  does not generalize to the quantum
case, where the notion of ``conditioning'' has no obvious meaningful analog. 
In general, even for consistent density matrices,
$$\rho_{12}\rho_{2}^{-1}\rho_{23}$$
need not be Hermitian, much less positive definite. While
$$
R=\exp[\log\rho_{12} + \log\rho_{23} - \log\rho_2]
$$
is Hermitian and positive definite its partial traces will not equal the
desired reduced density matrices, even after renormalization, which would
be required since the trace, $\tr_{123} R$, is never
greater than 1, and, generally, is less than 1. This fact follows from
the triple Golden-Thompson inequality \cite{wy}: 
$$
\tr_{123}\,  R \leq \tr_2\tr_{13}\int_0^\infty \rho_{12}\, \frac{1}{t +
\rho_2} \, \rho_{23} \, \frac{1}{t +
\rho_2}\, {\rm d}t = \tr_2 \int_0^\infty \rho_{2} \, \frac{1}{t +
\rho_2} \, \rho_{2}\,   \frac{1}{ t +
\rho_2} \, {\rm d}t = \tr_2 \rho_2 = 1 \ .
$$

There are, in fact, consistent pairs of density matrices $\rho_{12}$ and
$\rho_{23}$ that have {\em no} extension:  Suppose that $\rho_{12}$ is pure.
If  $\rho_{123}$ is such that $\tr_3 \rho_{123} = \rho_{12}$, the
purity of $\rho_{12}$ forces $\rho_{123}$ to have the form $\rho_{12}\otimes\rho_3$
and hence $\rho_{23} = \rho_{2}\otimes \rho_{3}$.

Thus, if $\rho_{12}$ is pure, the only compatible density matrices
$\rho_{23}$ with which
it has a common extension are the products  $\rho_{2}\otimes \rho_{3}$,
in which case the unique common extension is $\rho_{12}\otimes\rho_{3}$. 
Theorem~\ref{prod only} in the next section generalizes this result by identifying
a class of non-pure states $\rho_{12}$,
which can only be extended by product states $\rho_{12}\otimes
\rho_3$.

Motivated by these examples, and the obvious difference between the
classical and quantum cases, we believe that a proper understanding of this
problem will lead to a clearer understanding of entanglement and quantum
information theory. 

\subsection{The quantum marginal problem}

The quantum marginal problem, in its simplest form,  is the following: Given density matrices $\rho_1$ and $\rho_2$, on $\cH_1$ and $\cH_2$ respectively, 
let $\mathcal{C}(\rho_1,\rho_2)$ denote the set of all density matrices $\rho_{12}$ on $\cH_1\otimes \cH_2$
such that 
\begin{equation}\label{two}
\tr_1 \rho_{12} =\rho_2\qquad{\rm  and}\qquad\tr_2 \rho_{12} =\rho_1\ .
\end{equation}
$\mathcal{C}(\rho_1,\rho_2)$ is the set of {\em quantum couplings} of $\rho_1$ and $\rho_2$. 

Note that $\mathcal{C}(\rho_1,\rho_2)$ is never empty; it always contains $\rho_1\otimes \rho_2$. It is also 
evidently convex and compact, and hence it is the convex hull of its extreme points. 
Characterizations of the extreme points have been given by Parathasarathy \cite{P} and Rudolph \cite{R}.

Other research has focused on the relations between the spectra of $\rho_1$, $\rho_2$ and $\rho_{12}$ that
characterize the set of triples of density matrices satisfying (\ref{two}). For results in this direction see the recent papers of Klyachko \cite{K},
and Christandl et. al. \cite{CDKW}, and references therein.

The general quantum marginal problems concerns states on the product of arbitrarily many Hilbert spaces, and the marginals obtained by
taking any combination of partial traces -- for instance,  the partial traces $\rho_{12}$ and $\rho_{23}$ of $\rho_{123}$ as in our problem.
Most results pertain to the case of non-overlapping marginals, in contrast to the problem considered here. An exception is
the paper by Osborne \cite{O}.  He uses the SSA of the von Neumann entropy \cite{LR} to prove an upper bound the number of {\em orthogonal
pure states} $\rho_{123}$ such that $\tr_1\,\rho_{123} = \rho_{23} $ and $\tr_3 \, \rho_{123} = \rho_{12} $. We shall also employ entropy bounds, but are 
mainly concerned with the existence, or not, of {\em mixed-state} extensions $\rho_{123}$ of compatible pairs $\rho_{12}$ and $\rho_{23}$.

It is natural to use entropy in this investigation, and some early results  on the two-space problem are given in terms of entropy. 
For example, if $\rho_{12}$ is a pure state, then its partial traces $\rho_1$ and $\rho_2$ have the same non-zero spectrum \cite{al}.
Conversely, if  $\rho_1$ and $\rho_2$ are two density matrices having the same non-zero spectrum, then there is a pure state
$\rho_{12}$ satisfying (\ref{two}). To construct it, assume without loss of generality that $\cH_1$ and $\cH_2$ have the same dimension. 
There is a unitary $U$ such that $U\rho_2U^* = \rho_1$. Define $\Psi :=
U\sqrt{\rho_2}$, where the matrix on the right is
regarded as a vector in $\cH_1\otimes \cH_2$. 
Then $\rho_{12} = |\Psi\rangle\langle \Psi|$ is such a pure state. 

If $\rho_1$ and $\rho_2$ do not have the same spectrum, then the set $\mathcal{C}(\rho_1,\rho_2)$ cannot contain any pure state.
It is natural then to ask for the least entropy element of  $\mathcal{C}(\rho_1,\rho_2)$.  Since the entropy is concave, the minimum will
be attained at an extreme point, and the results of \cite{P,R} are relevant,
and lead easily to the minimizer in specific cases. However, a
well-known inequality already provides 
a sharp {\em a-priori} lower bound for this minimum entropy coupling:

The {\em Araki-Lieb Triangle inequality}  \cite{al} says that when
(\ref{two}) is satisfied, 
\begin{equation}\label{alt}
S_{12} \geq |S_1 - S_2|\ .
\end{equation}
For an interesting discussion of entropy inequalities for a general {\em classical} marginal problem, see \cite{FC}.

\section{Necessary Conditions for the Existence of an Extension}

Strong Subadditivity of Entropy (SSA) \cite[see also \cite{l}]{LR} provides
us with two necessary conditions for an extension to exist:
Let $\rho_{123}$ be an extension of $\rho_{12}$ and $\rho_{23}$. Then,
discarding the positive term $S_{123}$ in SSA, 
\begin{equation}\label{cheap}
 S_{12}+ S_{23} \geq S_2\ .
\end{equation}
Classically, one has monotoniciy of the entropy, meaning $S_{12} \geq S_2$
and
$S_{23} \geq S_2$ so that (\ref{cheap}) hold classically with $2S_2$ on the
right side. However, quantum mechanically, equality can hold in
(\ref{cheap}): Let $\rho_{12}$ be a purification
 of $\rho_2$, i.e., a pure state $|\Psi\rangle\langle\Psi|$ on $\cH_1\otimes \cH_2$ 
such that $\tr_2(|\Psi\rangle\langle\Psi|) = \rho_1$. Such purifications always exist\footnote[2]{It is well known, and easy to see from the definition, that the set of all
possible purifications $\Psi$ of $\rho_1$ is  the set of all operators $\sqrt{\rho_1}U$, 
regarded as vectors in $\cH_1\otimes \cH_2$, where $U$ is a partial isometry from a subspace of 
$\cH_2$ onto the range of $\rho_1$.}.  Let
$\rho_{23}$ be the tensor product of $\rho_2$ and a pure state. Then
$S_{12} =0$ and $S_{23} = S_2$. 

A second form of SSA \cite{LR} leads to a sharper necessary condition: 
Again
assume that $\rho_{123}$ is an extension of $\rho_{12}$ and $\rho_{23}$.
Then 
\begin{equation}\label{pol}
S_{12} + S_{23} \geq S_1 +S_3\ . 
\end{equation}
As we now explain, whenever (\ref{pol}) is satisfied by any consistent pair
of density matrices $\rho_{12}$ and $\rho_{23}$ (not necessarily
possessing a common extension), then (\ref{cheap}) is automatically
satisfied. 
To see this, note that by (\ref{alt}), $S_2 - S_1 \leq S_{12}$ and $S_2 - S_3 \leq S_{23}$. Adding these
inequalities, we obtain
$$2S_2 \leq S_{12}+S_{23} + S_1+ S_3\ .$$
By (\ref{pol}), the right side is no greater than $2(S_{12}+S_{23})$, which
yields (\ref{cheap}).

The density matrices for which there is equality in the triangle inequality
have a particular structure:
Let $m$ and $n$ be positive integers, and let $\{\lambda_1,\dots,\lambda_m\}$ and $\{\mu_1,\dots,\mu_n\}$
be sets of positive numbers with $\sum_{j=1}^m\lambda_j = \sum_{k=1}^n\mu_k =1$. 
Then, as shown in \cite{cl}, there exists a density matrix $\rho_{12}$ such that the non-zero eigenvalues of $\rho_{12}$ are 
$\{\lambda_1,\dots,\lambda_m\}$, the non-zero eigenvalues of $\rho_{2}$ are $\{\mu_1,\dots,\mu_n\}$
and the non-zero eigenvalues of $\rho_1$ are the numbers $\{\lambda_j\mu_k\ :\ 1\leq j \leq m\ ,\ 1 \leq k \leq n\}$. 
For any such $\rho_{12}$, it is evident that $S_{12}= S_1 - S_2$. Moreover, as shown in \cite{cl}, whenever 
$S_{12}= S_1 - S_2$, the spectra of $\rho_{12}$, $\rho_{1}$ and $\rho_2$ are related in this way. 

The following theorem generalizes the observation that pure states
$\rho_{12}$
may only be extended by product states $\rho_{12}\otimes \rho_3$.  When
$\rho_{12}$
is pure, $0 = S_{12} = S_2-S_1$.

\begin{thm}\label{prod only} Let $\rho_{12}$ be a density matrix such that
\begin{equation}\label{tri1}
S_{12} = S_1- S_2\ .
\end{equation}
Then $\rho_{12}$ and $\rho_{23}$ have a common extension if and only if 
$\rho_{23}= \rho_{2}\otimes \rho_3$.
\end{thm}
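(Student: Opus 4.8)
The plan is to prove both implications, handling the easy ``if'' direction by exhibiting an explicit extension and then reducing the substantive ``only if'' direction to a single scalar entropy identity that follows almost immediately from the second form of SSA, (\ref{pol}), that was just established.

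For the ``if'' direction, suppose $\rho_{23}=\rho_2\otimes\rho_3$. I would simply take $\rho_{123}:=\rho_{12}\otimes\rho_3$ and verify the two partial-trace constraints directly: $\tr_3(\rho_{12}\otimes\rho_3)=\rho_{12}$ because $\tr\rho_3=1$, and $\tr_1(\rho_{12}\otimes\rho_3)=(\tr_1\rho_{12})\otimes\rho_3=\rho_2\otimes\rho_3=\rho_{23}$, where the last step uses the standing consistency hypothesis $\tr_1\rho_{12}=\rho_2$. No obstacle arises here.

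For the ``only if'' direction, suppose a common extension $\rho_{123}$ exists; the goal is to force $\rho_{23}=\rho_2\otimes\rho_3$. I would reduce this to showing the single inequality $S_{23}\ge S_2+S_3$: subadditivity already supplies the reverse bound $S_{23}\le S_2+S_3$, and the well-known equality criterion for subadditivity (equality holds if and only if the state is a product, as one sees from the vanishing of the relative entropy $S(\rho_{23}\,\|\,\rho_2\otimes\rho_3)=S_2+S_3-S_{23}$) then upgrades the resulting equality to $\rho_{23}=\rho_2\otimes\rho_3$. To obtain $S_{23}\ge S_2+S_3$ I would feed the existence of $\rho_{123}$ into the second form of SSA, (\ref{pol}), giving $S_{12}+S_{23}\ge S_1+S_3$, and substitute the hypothesis (\ref{tri1}), $S_{12}=S_1-S_2$. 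This yields $(S_1-S_2)+S_{23}\ge S_1+S_3$, i.e. $S_{23}\ge S_2+S_3$, which is exactly the missing bound.

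The argument is thus short, and the only genuinely nontrivial ingredient is the equality condition for subadditivity used in the reduction; I expect this rigidity statement — rather than any of the entropy manipulations — to be the step requiring care, since one must know not merely that $S_{23}=S_2+S_3$ but that this equality forces the tensor-product form. As a structural cross-check, passing to a purification $\rho_{012}$ of $\rho_{12}$ turns (\ref{tri1}) into equality in the subadditivity bound $S_{02}\le S_0+S_2$, hence into $\rho_{02}=\rho_0\otimes\rho_2$; this makes each eigenvector of $\rho_{12}$ a purification of the same $\rho_2$ and is the structural shadow of the product conclusion, consistent with the spectral description quoted from \cite{cl} above.
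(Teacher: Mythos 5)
Your proof is correct and follows essentially the same route as the paper: the ``only if'' direction substitutes the hypothesis (\ref{tri1}) into the SSA inequality (\ref{pol}) to get $S_{23}\ge S_2+S_3$, then invokes equality in subadditivity to conclude $\rho_{23}=\rho_2\otimes\rho_3$, exactly as the paper does. Your explicit verification of the ``if'' direction via $\rho_{123}=\rho_{12}\otimes\rho_3$ and your spelling out of the relative-entropy equality criterion are details the paper leaves implicit, but they do not constitute a different argument.
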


\begin{proof} Using (\ref{tri1}) in (\ref{pol}) we obtain
\begin{equation*}
S_{1} - S_{2} + S_{23} \geq S_{3} + S_1\ .
\end{equation*}
which reduces to $S_{23} \geq S_2+S_3$. By the subadditivity of the entropy, this means that
$S_{23} = S_2+S_3$, and so $\rho_{23} = \rho_2\otimes \rho_3$. 
\end{proof}

In Section 4 we give an example in which (\ref{pol}) is satisfied, but
there is no common extension.

\section{Sufficient Conditions for the Existence of an Extension}

We do not have any very general sufficient conditions for compatible pairs to possess a common extension. One general positive statement that can be made is
the following:

\begin{thm} Let $\rho_{12}$ and $\rho_{23}$ be a compatible pair of density matrices  that  posses a common extension $\rho_{123}$
that is positive definite.  Let $\|\cdot\|$ denote the trace norm. Then there is an $\epsilon>0$, depending on the dimensions and the smallest eigenvalue of $\rho_{123}$, 
such  that if $\widetilde \rho_{12}$ and  $\widetilde \rho_{23}$ is another compatible pair
on the same spaces and 
\begin{equation}\label{open}
\|\rho_{12} - \widetilde\rho_{12}\| +  \|\rho_{23} - \widetilde\rho_{23}\| < \epsilon\ ,
\end{equation}
then  $\widetilde \rho_{12}$ and  $\widetilde \rho_{23}$ possess common extension.
\end{thm}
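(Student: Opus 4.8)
The plan is to view the nearby pair as a perturbation of the given one and to correct the positive-definite extension $\rho_{123}$ additively. Writing $\widetilde\rho_{12} = \rho_{12} + \delta_{12}$ and $\widetilde\rho_{23} = \rho_{23} + \delta_{23}$ with $\delta_{12},\delta_{23}$ Hermitian, $\tr\,\delta_{12}=\tr\,\delta_{23}=0$, and $\|\delta_{12}\| + \|\delta_{23}\| < \epsilon$, I would look for the new extension in the form $\sigma_{123} = \rho_{123} + \Delta_{123}$. This reduces the theorem to the following linear-algebra task: construct a Hermitian $\Delta_{123}$ with $\tr_3\Delta_{123} = \delta_{12}$, $\tr_1\Delta_{123} = \delta_{23}$, and $\|\Delta_{123}\|$ bounded by a dimension-dependent multiple of $\|\delta_{12}\| + \|\delta_{23}\|$. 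The compatibility of both pairs is exactly what makes this solvable: it forces $\tr_1\delta_{12} = \tr_3\delta_{23} =: \delta_2$ on $\cH_2$, so the two prescribed partial traces are consistent.

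The key step is the explicit corrector
$$\Delta_{123} = \delta_{12}\otimes \frac{\id_3}{d_3} + \frac{\id_1}{d_1}\otimes \delta_{23} - \frac{\id_1}{d_1}\otimes \delta_2 \otimes \frac{\id_3}{d_3}, \qquad d_j := \dim\cH_j.$$
Taking $\tr_3$ of the first term returns $\delta_{12}$, while the second and third terms both reduce to $\id_1/d_1 \otimes \delta_2$ and cancel, so $\tr_3\Delta_{123} = \delta_{12}$; the computation of $\tr_1\Delta_{123} = \delta_{23}$ is symmetric. The subtracted overlap term is precisely what removes the double counting on $\cH_2$. Because the trace norm is multiplicative over tensor factors and $\|\id_j/d_j\| = 1$, the triangle inequality together with the contractivity $\|\delta_2\| = \|\tr_1\delta_{12}\| \le \|\delta_{12}\|$ gives $\|\Delta_{123}\| \le 2(\|\delta_{12}\| + \|\delta_{23}\|)$.

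It then remains to verify that $\sigma_{123}$ is a density matrix. Hermiticity is immediate, and $\tr\,\sigma_{123} = 1$ since $\tr\,\Delta_{123} = \tr\,\delta_{12} = 0$. For positivity, let $\lambda > 0$ be the smallest eigenvalue of $\rho_{123}$. As the operator norm is dominated by the trace norm, $\Delta_{123} \ge -2(\|\delta_{12}\|+\|\delta_{23}\|)\,\id > -2\epsilon\,\id$, hence $\sigma_{123} \ge (\lambda - 2\epsilon)\,\id$, which is positive once $\epsilon < \lambda/2$. Tracking the constant through the tensor-norm bounds shows that the admissible $\epsilon$ depends only on $\lambda$ and the dimensions, as asserted, and $\sigma_{123}$ is the desired common extension of $\widetilde\rho_{12}$ and $\widetilde\rho_{23}$.

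I do not anticipate a genuine obstacle once the corrector is found; the entire difficulty is in guessing its three-term additive form, which is the linearized analogue of the classical multiplicative conditioning formula (\ref{constr}). The only point demanding care is to invoke compatibility at the right moment, ensuring $\delta_2$ is well defined so that the two partial-trace constraints can be met simultaneously by a single $\Delta_{123}$.
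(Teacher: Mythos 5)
Your proof is correct and follows essentially the same route as the paper: an additive three-term corrector to $\rho_{123}$ whose partial traces reproduce the perturbations, with positivity restored by the spectral gap of $\rho_{123}$ once $\epsilon$ is small. The only (immaterial) difference is that you tensor the differences with the maximally mixed states $\id_1/d_1$, $\id_3/d_3$, where the paper uses $\widetilde\rho_1$ and $\widetilde\rho_3$; both choices yield the same partial-trace cancellation on $\cH_2$ and the same norm bounds.
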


\begin{proof}

Define  
$$\widetilde \rho_{123} = \rho_{123} + [\widetilde \rho_{12} - \rho_{12}]\otimes \widetilde \rho_3 + \widetilde \rho_1\otimes [\widetilde \rho_{23} - \rho_{23}]
+ \widetilde \rho_1\otimes[ \rho_2 - \widetilde \rho_2]\otimes \widetilde \rho_3\ .$$
It is easily checked that $\tr_1\widetilde \rho_{123} = \widetilde \rho_{23}$ and   $\tr_3\widetilde \rho_{123} = \widetilde \rho_{12}$.  Furthermore, 
$\widetilde \rho_{123}$ is self-adjoint, and under the condition (\ref{open}), is positive when $\epsilon$ is chosen sufficiently small.  
\end{proof}

\begin{remark} Of course, in this finite dimensional setting, the trace norm could be replaced by any other norm.  Also,
Suppose that $\rho_{12}$ is positive definite, and $\rho_{23} = \rho_2\otimes \rho_3$ with $\rho_3$ positive definite. In this case
$\rho_{12}$ and $\rho_{23}$ are compatible and have the positive-definite  common extension $\rho_{12}\otimes \rho_3$. 
The theorem says that any compatible pair that is a small perturbation of  $\rho_{12}$ and $\rho_{23}$ has a common extension. 
However, as Theorem~\ref{prod only} shows, the requirement of positive definiteness cannot be dropped. 
\end{remark}

Let $\rho_{12}$ and $\rho_{23}$ be a consistent pair of density matrices,
both of which are separable.  Recall that a bipartite density
matrix $\rho_{12}$ is {\em finitely separable} if is is a convex combination of product states; i.e.,
if
$$\rho_{12} = \sum_{j=1}^N \lambda_j \sigma_1^{(j)}\otimes  \tau_2^{(j)}$$
where each $\sigma_1^{(j)}$ is a density matrix on $\cH_1$, each 
$\tau_2^{(j)}$ is a density matrix on $\cH_2$, and, crucially, each $\lambda_j > 0$. 
The set of separable density matrices is the closure of the set of finitely separable density matrices. 

Separable density matrices are often viewed as being 
a classical ensemble of product states. As such, separable states are free
of {\em quantum correlations} among observables on $\cH_1$ and $\cH_2$.
Separable bipartite density matrices
do, indeed,  behave much more like classical joint probability
distributions. For example, if $\rho_{12}$ is separable,
then (see e.g. \cite{cl})
$$S_{12} \ge \max\{S_1\, \ S_2\}\ ,$$
and thus when both $\rho_{12}$ and $\rho_{23}$ are separable and compatible, $S_{12}+S_{23} \geq S_1+S_3$,
and thus the condition (\ref{pol}), which is necessary for a common  extension to exists, is always satisfied. 

One might therefore, hope that a common extension
$\rho_{123}$ would exist whenever $\rho_{12}$ and $\rho_{23}$ are compatible
and both are separable. 
In the next  section, we show that this is not the case: Further conditions must be imposed to ensure the existence of an extension. 
Here is one case in which a common extension does exist:

Suppose  that
\begin{equation}\label{sep1}
\rho_{12} = \sum_{j=1}^n \lambda_j \, \rho^{(j)}\otimes \sigma^{(j)}
\qquad{\rm and}\qquad
\rho_{23} = \sum_{j=1}^n \lambda_j \, \sigma^{(j)}\otimes \tau^{(j)}\,
\end{equation}
where the $\rho^{(j)}$, $\sigma^{(j)}$ and $\tau^{(j)}$ are density matrices
and the $\lambda_j$ are positive numbers with $\sum_{j=1}^n\lambda_j = 1$. 
Then
\begin{equation}\label{sep2}
 \rho_{123} :=  \sum_{j=1}^n \lambda_j \, \rho^{(j)}\otimes
\sigma^{(j)}\otimes \tau^{(j)}
\end{equation}
is a common extension. 

While this example is based on a strong assumption, namely
that the weights and the factors on $\cH_2$ coincide, we shall show in Section 4
that consistency and separability is not enough. 

However, as we now explain, there is a more general version of this
construction using coherent states. Good references on coherent states are
\cite{KS,AP}.

Let $\cH$ be any $d$ dimensional Hilbert space. Define $J = (d-1)/2$. 
Then there is an irreducible representation of $SU(2)$ on $\cH$.  Associated to this representation is a family
of pure states $|\Omega\rangle\langle \Omega|$ on $\cH$, parameterized by 
a point $\Omega$ in the unit sphere, $S^2$. 
Given any self-adjoint operator $A$ on $\cH$, there is a function
$\widehat a(\Omega)$ on $S^2$ such that
\begin{equation}\label{upper}
 A = \int_{S^2} {\rm d}\Omega\, \, \widehat
a(\Omega)\, |\Omega\rangle\langle \Omega|\
\end{equation}
where the integration is with respect to the uniform probability measure on
$S^2$. 
The function $\widehat a(\Omega)$ is {\em unique} if we further require that
$\widehat a(\Omega)$ be a spherical harmonic of degree no higher than
$2J$. (See \cite[Page 33 and Equation (4.10)]{KS}.)   
This function $\widehat a(\Omega)$ is called the {\em upper symbol} of the
operator $A$. 
The upper symbol of $A$ need not be a non-negative function on $S^2$ even if
$A$ is positive definite. However, if $\widehat a$ is non-negative, then
the operator $A$ defined by (\ref{upper}) is positive semidefinite.

Now consider a bipartite density matrix $\rho_{12}$ on $\cH_1\otimes \cH_2$ 
where the dimension of $\cH_j = d_j$ for $j=1,2$. Suppose that has a
representation
$$
\rho_{12} = \int_{S^2}{\rm d}\Omega_2\int_{S^2}{\rm d}\Omega_1 \,
\widetilde \rho_{12}(\Omega_1,\Omega_2) \,  
\left(|\Omega_1\rangle\langle \Omega_1| \otimes 
|\Omega_2\rangle\langle \Omega_2|\right)
$$
with
$\widetilde \rho_{12}(\Omega_1,\Omega_2)$ a nonnegative function on
$S^2\times S^2$ that is a spherical harmonic of  degree $d_j-1$ in $\Omega_j$, $j=1,2$.

Then taking the trace over $\cH_1$, we find
$$\rho_2 := \tr_1\rho_{12} =   \int_{S^2}{\rm d}\Omega_2 \left[
\int_{S^2}{\rm d}\Omega_1\, \,  \widetilde\rho_{12}(\Omega_1,\Omega_2)\,
\right] |\Omega_2\rangle\langle \Omega_2| \ ,$$
and evidently 
${\displaystyle  \int_{S^2}  {\rm
d}\Omega_1}\, \widetilde \rho_{12}(\Omega_1,\Omega_2)\,  
$
has the degree $d_2-1$ in $\Omega_2$,  which ensures that it is the unique upper symbol of
$\rho_{2}$
of minimal degree. 

Likewise, if $\rho_{23}$ is a density matrix on $\cH_2\otimes \cH_3$ with the dimension of $\cH_3 = d_j$, 
and if $\rho_{23}$ is compatible with $\rho_{12}$ and moreover
$$\rho_{23} = \int_{S^2}{\rm d}\Omega_2\int_{S^2}{\rm d}\Omega_3\, \, 
\widetilde\rho_{23}(\Omega_2,\Omega_3) \, \left(|\Omega_2\rangle\langle
\Omega_2|  \otimes|\Omega_3\rangle\langle \Omega_3| \right)
$$
with
$\widetilde\rho_{23}(\Omega_2,\Omega_3)$ being  a nonnegative function on
$S^2\times
S^2$ that is a spherical harmonic of  degree $d_j-1$ in $\Omega_j$, $j=2,3$, 
$\int_{S^2}{\rm
d}\Omega_3\,\widetilde\rho_{23}(\Omega_2,\Omega_3)$ is the upper
symbol of $\rho_2$, and
thus quantum compatibility of $\rho_{12}$ and $\rho_{23}$ implies the
classical compatibility of the probability densities
$\widetilde\rho_{12}(\Omega_1,\Omega_2)$ and $\widetilde\rho_{23}(\Omega_2,\Omega_3)$.
Hence the classical prescription may be used to define
$$\widetilde\rho_{123}(\Omega_1,\Omega_2,\Omega_3) :=
\frac{\rho_{12}(\Omega_1,\Omega_2)\, \rho_{23}(\Omega_2,\Omega_3)}{
\rho_2(\Omega_2) } \ ,$$
and it is then the case that
$$\rho_{123} := \int_{S^2}{\rm d}\Omega_1 \int_{S^2}{\rm d}\Omega_2
\int_{S^2}{\rm d}\Omega_3\, \, \widetilde
\rho_{123}(\Omega_1,\Omega_2,\Omega_3)\,
\left(|\Omega_1\rangle\langle \Omega_1| \otimes 
|\Omega_2\rangle\langle \Omega_2| \otimes 
|\Omega_3\rangle\langle \Omega_3|\right)
$$
is positive semidefinite, and 
is a common extension of $\rho_{12}$ and $\rho_{23}$.

\section{Separable Compatible Pairs with no Extension}

In this section we give examples of compatible pairs $\rho_{12}$ and
$\rho_{23}$ that do not have an extension, but nevertheless satisfy the 
necessary condition \eqref{pol}. Moreover, in these examples both
$\rho_{12}$ and $\rho_{23}$ will be  separable.
As we have remarked above,  one might expect the extension
problem to simplify in the presence of separability. This is not the case, as the following examples show. 
For simplicity, our
examples will be on $\C^2\otimes\C^2\otimes\C^2$, but of course may be embedded into higher dimensional spaces. 

\begin{lm}
 Let $\rho_{2}$ be a non-pure density matrix on $\C^2$; i.e., the rank
of $\rho_2$ is $2$. Let $\{\psi_1,\psi_2\}$
 be an orthonormal basis of $\C^2$ consisting of eigenvectors of $\rho_2$. Let 
 $\rho_2\psi_j = \mu_j\psi_j$, $j=1,2$. 
  Then there
exist two  unit vectors  $\phi_1$ and $\phi_2$ in $\C^2$
and positive numbers  $\nu_1$ and $\nu_2$ such that
\begin{equation}\label{twoB}
 \sum_{j=1}^2 \mu_j |\psi_j\rangle\langle \psi_j| = \rho_2 =
\sum_{j=1}^2
\nu_j |\phi_j\rangle\langle \phi_j|\ ,
\end{equation}
and such that the four vectors $\psi_1$, $\psi_2$, $\phi_1$ and $\phi_2$ are pairwise linearly independent. 
\end{lm}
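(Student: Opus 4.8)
The natural approach is to exploit the well-known classification of all ways of writing a fixed density matrix as a convex combination of pure states. Since $\rho_2$ has rank $2$, both $\mu_1,\mu_2>0$, and I would set $|\widehat\psi_j\rangle := \sqrt{\mu_j}\,|\psi_j\rangle$, so that $\rho_2 = \sum_{j=1}^2 |\widehat\psi_j\rangle\langle\widehat\psi_j|$. The plan is to produce the second decomposition in \eqref{twoB} in the form $\rho_2 = \sum_{i=1}^2 |\widehat\phi_i\rangle\langle\widehat\phi_i|$ with $|\widehat\phi_i\rangle := \sqrt{\nu_i}\,|\phi_i\rangle$, using the unitary freedom in the pure-state ensemble of a density matrix (the same fact recorded in the purification footnote above): two two-term ensembles $\{|\widehat\phi_i\rangle\}$ and $\{|\widehat\psi_j\rangle\}$ yield the same operator if and only if $|\widehat\phi_i\rangle = \sum_{j=1}^2 U_{ij}\,|\widehat\psi_j\rangle$ for some $2\times 2$ unitary $U$. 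Thus the whole problem reduces to choosing $U$ appropriately.

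Concretely, I would expand everything in the orthonormal basis $\{\psi_1,\psi_2\}$, so that $|\widehat\phi_i\rangle = (U_{i1}\sqrt{\mu_1},\,U_{i2}\sqrt{\mu_2})$, whence $\nu_i = \||\widehat\phi_i\rangle\|^2 = |U_{i1}|^2\mu_1 + |U_{i2}|^2\mu_2$ and $\phi_i = |\widehat\phi_i\rangle/\sqrt{\nu_i}$. Three observations then finish the argument. First, $\nu_i>0$ automatically, since $\mu_1,\mu_2>0$ and no row of a unitary vanishes. Second, $\phi_1$ and $\phi_2$ are automatically linearly independent: $\{\widehat\psi_1,\widehat\psi_2\}$ is a basis of $\C^2$, so $\widehat\phi_1\parallel\widehat\phi_2$ would force the two rows of $U$ to be proportional, which is impossible for orthonormal rows. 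Third, $\phi_i$ is proportional to $\psi_1$ exactly when its $\psi_2$-component vanishes, i.e.\ when $U_{i2}=0$, and proportional to $\psi_2$ exactly when $U_{i1}=0$. Hence the sole remaining requirement---pairwise linear independence of each $\phi_i$ from each $\psi_j$---is precisely that every entry of $U$ be nonzero.

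It therefore suffices to choose any $2\times 2$ unitary with no zero entry, for instance the Hadamard matrix $U = \tfrac{1}{\sqrt{2}}\begin{pmatrix}1&1\\1&-1\end{pmatrix}$, which gives $\nu_1=\nu_2=\tfrac12$ and $\phi_i\propto(\sqrt{\mu_1},\pm\sqrt{\mu_2})$, manifestly pairwise independent of $\psi_1,\psi_2$ and of each other. The only genuinely substantive step is the bookkeeping in the previous paragraph that converts ``pairwise linearly independent'' into the nonvanishing of the entries of $U$; everything else is forced. As a sanity check, the statement is transparent in the Bloch-ball picture: a rank-$2$ state is an interior point of the ball, the eigenbasis is the diameter through that point, and any other chord through it yields a second two-point decomposition whose endpoints avoid the eigen-directions, so the freedom in choosing $\phi_1,\phi_2$ is abundant. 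Note that the maximally mixed case $\mu_1=\mu_2$ needs no separate treatment, since only $\mu_1,\mu_2>0$ is ever used.
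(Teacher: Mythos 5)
Your proof is correct, but it takes a genuinely different route from the paper. The paper argues geometrically and splits into cases: if $\rho_2=\tfrac12 \id$ it simply picks a second orthonormal basis in general position; if the eigenvalues are distinct it picks an arbitrary unit vector $\phi_1$ that is not an eigenvector, subtracts the largest admissible multiple $\nu_1|\phi_1\rangle\langle\phi_1|$ so that the remainder is positive semidefinite and hence rank one, and then argues that the leftover direction $\phi_2$ can be neither proportional to $\phi_1$ nor an eigenvector. You instead work with subnormalized vectors $\widehat\psi_j=\sqrt{\mu_j}\,\psi_j$ and generate the second decomposition by mixing with a unitary, $\widehat\phi_i=\sum_j U_{ij}\widehat\psi_j$, reducing the lemma to the purely combinatorial condition that $U$ have no zero entry, which the Hadamard matrix satisfies. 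Both arguments are sound; note that you only need the easy direction of the ensemble-freedom fact you cite (that a unitary mixing of $\{\widehat\psi_j\}$ reproduces $\rho_2$ follows from the one-line computation $\sum_i \overline{U_{ik}}U_{ij}=\delta_{kj}$), so the converse direction of that theorem, which is the nontrivial part, is invoked but never actually used---your phrase ``the whole problem reduces to choosing $U$'' overstates what is needed, though harmlessly, since existence is all the lemma asks. What each approach buys: yours is uniform (no case split on degenerate spectrum), fully explicit ($\nu_1=\nu_2=\tfrac12$, $\phi_i\propto(\sqrt{\mu_1},\pm\sqrt{\mu_2})$), and in principle parametrizes all two-term decompositions; the paper's peeling argument is more elementary (no ensemble machinery) and makes visible the abundance of solutions, since $\phi_1$ can be \emph{any} non-eigenvector, which matches the Bloch-ball intuition you mention at the end.
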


\begin{proof}  If $\rho_2$ is $\tfrac12{\mathds 1}$, we may take
$\{\psi_1,\psi_2\}$
to be any orthonormal basis of $\C^2$, and then choose $\{\phi_1,\phi_2\}$
to be any other orthonormal basis such that $\phi_1$ is not proportional to 
either $\psi_1$ or $\psi_2$. 
Then 
the four vectors
$\phi_1,\phi_2,\psi_1,\psi_2$  are {\em pairwise} linearly independent.
In this case, we take $\mu_j = \nu_j =\tfrac12$ for $j=1,2$. 

Next, assume $\rho_2$ has distinct eigenvalues $\mu_1>\mu_2$.
Let $\{\psi_1,\psi_2\}$ be an orthonormal basis consisting
of eigenvectors of $\rho_2$. Let $\phi_1$ be any unit vector that is
not proportional to either $\psi_1$ or $\psi_2$. Then there is a unique
largest number $\nu_1>0$ so that $\rho_2 - \nu_1|\phi_1\rangle\langle
\phi_1|$ is positive semidefinite. Thus,
$\rho_2 - \nu_1|\phi_1\rangle\langle
\phi_1|$ is rank one, and hence
$$\rho_2 - \nu_1|\phi_1\rangle\langle 
\phi_1|= \nu_2 |\phi_2\rangle\langle
\phi_2|$$
for some unit vector $\phi_2$ that is not proportional to $\phi_1$ since
$\rho_2$ has rank $2$. Likewise, $\phi_2$ is not an eigenvector of
$\rho_{2}$, since by the non-degeneracy, this would force $\phi_1$ to be an 
eigenvector, which it is not. 
\end{proof}

\begin{thm}\label{noext}
 Let $\rho_{2}$ be a non-pure density matrix on $\C^2$. Then there exist
separable density matrices $\rho_{12}$ and $\rho_{23}$,  extending $\rho_2$,
 and such  that
\eqref{pol} is satisfied (with strict inequality), but such that
$\rho_{12}$ and $\rho_{23}$ have no common extension $\rho_{123}$.
\end{thm}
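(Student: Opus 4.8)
The plan is to produce the counterexample directly from the two decompositions of $\rho_2$ furnished by the preceding lemma, and to destroy extendability by a \emph{support} (range) argument rather than an entropy argument. Using the $\psi$-decomposition $\rho_2=\sum_{j}\mu_j|\psi_j\rangle\langle\psi_j|$ and the $\phi$-decomposition $\rho_2=\sum_j\nu_j|\phi_j\rangle\langle\phi_j|$ of \eqref{twoB}, I would fix two unit vectors $\alpha_1,\alpha_2\in\C^2$ (to be placed in $\cH_1$) and an orthonormal pair $\beta_1,\beta_2\in\C^2$ (to be placed in $\cH_3$), and set
\begin{equation*}
\rho_{12}=\sum_{j=1}^2\mu_j\,|\alpha_j\rangle\langle\alpha_j|\otimes|\psi_j\rangle\langle\psi_j|,\qquad
\rho_{23}=\sum_{j=1}^2\nu_j\,|\phi_j\rangle\langle\phi_j|\otimes|\beta_j\rangle\langle\beta_j|.
\end{equation*}
Both are manifestly separable, and since $\tr\alpha_j=\tr\beta_j=1$ one checks $\tr_1\rho_{12}=\tr_3\rho_{23}=\rho_2$, so the pair is compatible and extends $\rho_2$. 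Note this deliberately breaks the hypothesis of \eqref{sep1} (the $\cH_2$-factors $\psi_j$ and $\phi_j$ differ), which is why the easy extension \eqref{sep2} is unavailable.

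\textbf{Nonexistence via supports.} First I would record the elementary fact that if $\rho_{123}\ge 0$ satisfies $\tr_3\rho_{123}=\rho_{12}$, then $\supp\rho_{123}\subseteq\supp\rho_{12}\otimes\cH_3$: applying $\rho_{123}$ against the projection onto the orthogonal complement and using positivity forces the complement into the kernel. Symmetrically $\supp\rho_{123}\subseteq\cH_1\otimes\supp\rho_{23}$, so any common extension is supported in $W:=\bigl(\supp\rho_{12}\otimes\cH_3\bigr)\cap\bigl(\cH_1\otimes\supp\rho_{23}\bigr)$, where $\supp\rho_{12}=\mathrm{span}\{\alpha_1\otimes\psi_1,\alpha_2\otimes\psi_2\}$ and $\supp\rho_{23}=\mathrm{span}\{\phi_1\otimes\beta_1,\phi_2\otimes\beta_2\}$ are two-dimensional. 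The crux is to show $W=\{0\}$: taking $x\in W$ and writing it both as $\sum_j\alpha_j\otimes\psi_j\otimes u_j$ and as $\sum_j v_j\otimes\phi_j\otimes\beta_j$, I would contract the $\cH_2$ slot against $\psi_1$ and against $\psi_2$, then expand in the orthonormal pair $\beta_1,\beta_2$. Because the lemma guarantees each $\phi_j$ has nonzero component along both $\psi_1$ and $\psi_2$ (pairwise linear independence), these equations force $v_1,v_2\in\mathrm{span}\{\alpha_1\}$ and simultaneously $v_1,v_2\in\mathrm{span}\{\alpha_2\}$; choosing $\alpha_1\not\parallel\alpha_2$ then gives $v_1=v_2=0$, i.e. $x=0$. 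A trace-one extension supported in $W=\{0\}$ is impossible.

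\textbf{Arranging strict \eqref{pol}.} Orthogonality of the $\cH_2$-factors $\psi_1\perp\psi_2$ makes the two summands of $\rho_{12}$ orthogonal, so its spectrum is $\{\mu_1,\mu_2\}$ and $S_{12}=S_2$; likewise $\beta_1\perp\beta_2$ gives spectrum $\{\nu_1,\nu_2\}$ for both $\rho_{23}$ and $\rho_3=\tr_2\rho_{23}$, whence $S_{23}=S_3$. Tracing out $\cH_2$ in $\rho_{12}$ yields $\rho_1=\mu_1|\alpha_1\rangle\langle\alpha_1|+\mu_2|\alpha_2\rangle\langle\alpha_2|$, a trace-one $2\times2$ state with $\det\rho_1=\mu_1\mu_2\bigl(1-|\langle\alpha_1|\alpha_2\rangle|^2\bigr)<\mu_1\mu_2=\det\rho_2$ exactly when $\alpha_1\not\perp\alpha_2$; since entropy is a strictly increasing function of the determinant among trace-one $2\times2$ states, $S_1<S_2$. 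Hence $S_{12}+S_{23}-S_1-S_3=S_2-S_1>0$. Thus the choice $\alpha_1=\psi_1$, $\alpha_2=(\psi_1+\psi_2)/\sqrt2$ (non-proportional and non-orthogonal) together with orthonormal $\beta_1,\beta_2$ secures \emph{both} $W=\{0\}$ and strict inequality in \eqref{pol}.

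\textbf{Main obstacle.} The genuine content is the linear-algebra verification that $W=\{0\}$; this is what promotes "overlapping separable marginals" into a genuine obstruction, and it is precisely where the non-proportionality of $\alpha_1,\alpha_2$ and the full-support property of the $\phi_j$ (from the lemma) enter. The only tension I must watch is that the same freedom in $\{\alpha_j\}$ has to deliver strict \eqref{pol} as well; the displayed choice shows that the two demands---non-proportional for the support collapse, non-orthogonal for $S_1<S_2$---are compatible, so no balancing difficulty actually arises.
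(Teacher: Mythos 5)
Your proof is correct, and your counterexample is in fact exactly the paper's construction (your $\alpha_j,\beta_j$ are its $\eta_j,\chi_j$): both proofs rest on the lemma's two pairwise-independent decompositions of $\rho_2$ and on the observation that a positive common extension must be supported in $\supp\rho_{12}\otimes\cH_3$ and in $\cH_1\otimes\supp\rho_{23}$. Where you genuinely differ is in how the obstruction is extracted. The paper works with orthogonal complements: it exhibits eight explicit kernel vectors $\Psi_j,\Phi_j$, proves their linear independence under the \emph{extra} hypothesis that $\{\eta_1,\eta_2\}$ is orthonormal (which forces equality in \eqref{pol}), and must then perturb $\{\eta_1,\eta_2\}$ to a non-orthogonal pair, invoking openness of linear independence, to obtain strict inequality while keeping the kernel full. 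You instead show directly that $W=\bigl(\supp\rho_{12}\otimes\cH_3\bigr)\cap\bigl(\cH_1\otimes\supp\rho_{23}\bigr)=\{0\}$ by contracting the $\cH_2$ slot against $\psi_1$ and $\psi_2$ and using $\langle\psi_i,\phi_j\rangle\neq 0$ (which is exactly the lemma's pairwise independence) together with linear independence of $\alpha_1,\alpha_2$. This is the dual formulation of the paper's spanning argument, but it holds uniformly for \emph{every} linearly independent pair $\alpha_1,\alpha_2$, orthonormal or not; that uniformity is what lets you fix $\alpha_1=\psi_1$, $\alpha_2=(\psi_1+\psi_2)/\sqrt2$ from the outset and read off strict inequality from $S_1<S_2=S_{12}$ (via the determinant monotonicity), with no perturbation step at all. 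Your route is therefore slightly cleaner and self-contained, while the paper's is more explicit about which kernel vectors annihilate $\rho_{123}$; the entropy bookkeeping is the same in both, since $S_{12}=S_2$ and $S_1=S_{12}$ coincide in the orthonormal case.
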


\begin{proof}
 Let $\{\psi_1,\psi_2\}$ and $\{\phi_1,\phi_2\}$ be sets of
unit vectors such that \eqref{two} is satisfied, and where $\{\psi_1,\psi_2\}$ is an orthonormal basis of $\C^2$ consisting of eigenvectors
of $\rho$.

Let $\{\chi_1,\chi_2\}$ be any  orthonormal
basis of $\C^2$, and let $\{\eta_1,\eta_2\}$ be any linearly independent set of unit vectors in $\C^2$.  
Define
\begin{equation}\label{def}
\rho_{12} = \sum_{j=1}^2 \mu_j | \eta_j\otimes \psi_j\rangle \langle \eta_j\otimes \psi_j|  \qquad{\rm and}\qquad
\rho_{23} = \sum_{j=1}^2 \nu_j | \phi_j\otimes \chi_j\rangle \langle \phi_j\otimes \chi_j|\ .
\end{equation}
Then evidently $\tr_1\, \rho_{12} = \tr_{3}\, \rho_{23} = \rho_2$. 
Next, for any vector $(z,w)\in \C^2$, let $(z,w)^\perp = (-\overline{w},\overline{z})$ so that $(z,w)^\perp$
is orthogonal to $(w,z)$.

Suppose a common extension $\rho_{123}$ does exist. 
Then $\eta_1^\perp \otimes \psi_1$ is in the nullspace of $\rho_{12}$, and hence
$$\sum_{j=1}^2 \langle \eta_1^\perp \otimes \psi_1\otimes \chi_j, \rho_{123}\,
 \eta_1^\perp \otimes \psi_1\otimes \chi_j\rangle = 
 \langle \eta_1^\perp \otimes \psi_1, \rho_{12}\, \eta_1^\perp \otimes \psi_1\rangle = 0\
.$$
Therefore, since $\rho_{123}$ is positive, both
$\eta_1^\perp \otimes \psi_1\otimes \chi_1$ and
$\eta_1^\perp \otimes \psi_1\otimes \chi_2$ are in the nullspace of $\rho_{123}$.

Since $\eta_2^\perp \otimes \psi_2$ is in the null space of $\rho_{12}$, the
same  argument shows that 
both
$\eta_2^\perp \otimes \psi_2\otimes \chi_1$ and
$\eta_2^\perp \otimes \psi_2\otimes \chi_2$ are in the nullspace of $\rho_{123}$. 
Thus, the four vectors
$$\eta_1^\perp \otimes \psi_1\otimes \chi_j \qquad{\rm and}\qquad \eta_2^\perp \otimes \psi_2\otimes \chi_j\ ,  \qquad j = 1,2$$
belong to the nullspace of $\rho_{123}$.

Likewise,  both $\phi_!^\perp\otimes \chi_1$ and $\phi_2^\perp \otimes \chi_2$
belong to the nullspace of $\rho_{23}$. 
Arguing as above, we see that  the four vectors
$$\eta_j^\perp \otimes \phi_!^\perp\otimes \chi_1 \qquad{\rm and}\qquad \eta_j^\perp \otimes \phi_2^\perp \otimes \chi_2\ ,  \qquad j = 1,2$$
belong to the nullspace of $\rho_{123}$. 

Define the unit vectors
\begin{equation}\label{psidef}
\Psi_1 = \eta_1^\perp \otimes \psi_1\otimes \chi_1 \ , \quad  \Psi_2 = \eta_1^\perp \otimes \psi_1\otimes \chi_2 \ , \quad  
\Psi_3 = \eta_2^\perp \otimes \psi_2\otimes \chi_1 \ , \quad {\rm and}\qquad \Psi_4 = \eta_2^\perp \otimes \psi_2\otimes \chi_2 \ , \quad 
\end{equation}
and the vectors 
\begin{equation}\label{phidef}
\Phi_1 = \eta_1^\perp \otimes \phi_1^\perp\otimes \chi_1 \ , \quad  \Phi_2 = \eta_1^\perp \otimes \phi_2^\perp\otimes \chi_2 \ , \quad  
\Phi_3 = \eta_2^\perp \otimes \phi_1^\perp\otimes \chi_1 \ , \quad {\rm and}\qquad \Phi_4 = \eta_2^\perp \otimes \phi_2^\perp\otimes \chi_2 \ , \quad 
\end{equation}
These $8$ vectors are in the nullspace of $\rho_{123}$under the hypotheses imposed so far. 

Now let us temporarily impose the additional hypothesis that $\{\eta_1,\eta_2\}$ is orthonormal. Consequently, 
$\{\eta_1^\perp,\eta_2^\perp\}$ is orthonormal. 

We claim that the $8$ unit vectors $\Psi_j,\Phi_j$, $j=1,2,3,4$,  are linearly independent. To see this, note that under our additional hypothesis,
 $\{\Psi_1,\Psi_2,\Psi_3,\Psi_4\}$ and
$\{\Phi_1,\Phi_2,\Phi_3,\Phi_4\}$ are orthonormal, and $\langle \Psi_i,
\Phi_j\rangle = \delta_{i,j}$. 
Therefore, if 
$$\sum_{j=1}^4 a_j\Psi_j +  \sum_{j=1}^4 b_j\Phi_j = 0\ ,$$
$$(a_j\Psi_j + b_j\Phi_j) = 0\qquad{\rm for\ each}\qquad j=1,2,3,4. $$
However, since the vectors $\psi_1$ and $\phi_1$ are linearly independent,
so are the vectors  $\psi_1^\perp$ and $\phi_2= \phi_1^\perp$. 
Thus $\Psi_1$ and $\Phi_1$ are linearly independent, and hence $a_1 = b_1 = 0$. 

The same sort of argument shows that $a_j = b_j =0$ for each $j=1,2,3,4$. Thus the eight vectors $\Psi_j$, $j=1,2,3,4$ and 
$\Phi_j$, $j=1,2,3,4$ are linearly independent and in the nullspace of $\rho_{123}$, Since the dimension of $\C^2\otimes \C^2\otimes C^2$
is $8$, this means $\rho_{123} =0$, which is impossible. 

Therefore, the compatible pair of density matrices $\rho_{12}$ and $\rho_{23}$ defined by \eqref{def} has no common extension. 

Next, observe from \eqref{def} that 
$$\rho_1 = \sum_{j=1}^2 \mu_j |\eta_j\rangle\langle \eta_j|\ .$$
Since $\eta_1$ and $\eta_2$ are orthogonal, and since $\eta_1\otimes \psi_1$ and $\eta_2\otimes \psi_2$ are orthogonal
$$S_1 = S_{12} = -\sum_{j=1}^2 \mu_j \log \mu_j\ .$$

In the same manner we see that 
$$S_3 = S_{23} = -\sum_{j=1}^2 \nu_j \log \nu_j\ .$$
Thus we have
$$S_{12} + S_{23} = S_1 + S_3\ .$$

Finally, making  a sufficiently small change in $\{\eta_1,\eta_2\}$, we may arrange that this set of unit vectors is no longer orthogonal, but
that the   $8$ unit vectors $\Psi_j,\Phi_j$, $j=1,2,3,4$  defined in \eqref{psidef} and \eqref{phidef} are still linearly independent. 
Then the compatible pair $\rho_{12}$ and $\rho_{23}$ defined by \eqref{def}
has no common extension.  It is  still true that $S_{23} = S_3$.
Also, since $\{\psi_1,\psi_2\}$ is orthonormal, $\{\eta_1\otimes \psi_1,\eta_2\otimes \psi_2\}$ is orthonormal, and so 
$S_{12} = - \sum_{j=1}^2 \mu_j \log \mu_j$. However, $\rho_1=    \sum_{j=1}^2 \mu_j |\eta_j\rangle\langle \eta_j|$ and since 
$\{\eta_1,\eta_2\}$ is not orthogonal, 
$$S_1 <  -\sum_{j=1}^2 \mu_j\log \mu_j = S_{12}\ .$$
 Therefore $S_{12}+S_{23} > S_1+S_2$. 
\end{proof}

\begin{remark} While the condition that $\rho_1$ and $\rho_2$ have the same non-zero spectrum is necessary and sufficient 
to ensure 
that there exists a common purification of $\rho_1$ and $\rho_2$, the above construction shows that the condition that
 the nonzero spectrum of $\rho_{12}$ equals that of $\rho_{3}$, and that 
  the nonzero spectrum of $\rho_{23}$ equals that of $\rho_{1}$, together with compatibility,  does not ensure that there exists a common purification of 
  $\rho_{12}$ and $\rho_{23}$.
Indeed, let $\{\eta_1,\eta_2\}$, $\{\psi_1,\psi_2\}$, $\{\phi_1,\phi_2\}$ and $\{\chi_1,\chi_2\}$ be four orthonormal bases of $\C^2$
having no vectors (or their opposites) in common. Then with $\mu_j = \nu_j =1/2$, $j=1,2$,  (\ref{def}) defines
a compatible pair of density matrices with $\rho_{2} = \frac12 I$.  The non-zero spectrum of $\rho_{12}$, $\rho_{23}$,
$\rho_{1}$ and $\rho_{3}$ is $\{1/2,1/2\}$, and yet there is no common
extension of $\rho_{12}$ and $\rho_{23}$, as the proof of
Theorem~\ref{noext} shows.
\end{remark}

\end{document}